\numberwithin{equation}{section}
\theoremstyle{plain}
\newtheorem{theorem}{Theorem}[section]
\newtheorem{lemma}[theorem]{Lemma}
\newtheorem{proposition}[theorem]{Proposition}
\newtheorem{conjecture}[theorem]{Conjecture}
\theoremstyle{definition}
\newtheorem{definition}[theorem]{Definition}
\newtheorem{example}[theorem]{Example}
\theoremstyle{remark}
\newtheorem{remark}[theorem]{Remark}
\date{\today}
\newcolumntype{C}{>{\centering\arraybackslash} m{3cm} }
\tikzstyle{braceedge}=[decorate,decoration={brace,amplitude=10pt}]
\tikzstyle{square box}=[rectangle,fill=white,draw=black,minimum height=6mm,minimum width=6mm,yshift=0.7mm]
\tikzstyle{wire label}=[font=\footnotesize, auto,swap]
\tikzstyle{none}=[inner sep=0pt]
\tikzstyle{gn}=[circle,fill=Lime,draw=Black,line width=0.8 pt]
\tikzstyle{rn}=[circle,fill=Red,draw=Black, line width=0.8 pt]
\tikzstyle{H}=[rectangle,fill=Yellow,draw=Black]
\tikzstyle{line}=[scalar,fill=White,draw=Black]
\tikzstyle{io}=[rectangle,fill=White,draw=Black]
\tikzstyle{block}=[rectangle,fill=Orange,draw=Black]
\tikzstyle{graph}=[circle,fill=White,draw=Black]
\tikzstyle{empty}=[rectangle,fill=White,draw=White]
\tikzstyle{box}=[rectangle,fill=White,draw=Black]
\tikzstyle{dot}=[circle,fill=Black,draw=Black,inner sep=0pt,minimum size=1pt]
\tikzstyle{Dot}=[circle,fill=Black,draw=Black,inner sep=0pt,minimum size=3pt]
\tikzstyle{diam}=[rectangle,fill=Black,draw,yscale=1.2,rotate=45]
\tikzstyle{gangle}=[rectangle,fill=Lime,draw=Black]
\tikzstyle{rangle}=[rectangle,fill=Red,draw=Black]
\tikzstyle{circ}=[circle,fill=none,draw=Black,scale=1.3]
\tikzstyle{ellip}=[ellipse,fill=none,draw=Black,scale=1.3,minimum width =1.3cm]
\tikzstyle{bbox}=[rectangle,fill=Blue,draw=Blue,scale=0.6]
\tikzstyle{gg}=[shape=rectangle,fill=White,draw=Black,dashed]
\tikzstyle{nodev}=[circle,fill=none,draw=Black,scale=1]
\tikzstyle{wirev}=[circle,fill=Black,draw=Black,inner sep=0pt,minimum size=3pt]
\tikzstyle{wirevred}=[circle,fill=Red,draw=Black,inner sep=0pt,minimum size=3pt]
\tikzstyle{simple}=[-,draw=Black]
\tikzstyle{directed}=[
\tikzstyle{bdirected}=[
\tikzstyle{bothdirs}=[bdirected,draw=Black]
\tikzstyle{bothdirsred}=[bdirected,draw=Red]
\tikzstyle{blue}=[-,draw=Blue]
\tikzstyle{redd}=[directed,draw=Red]
\tikzstyle{blued}=[directed,draw=Blue]
\tikzstyle{dash}=[dashed,draw=Black]
\tikzstyle{dotpic}=[scale=0.5]
\tikzstyle{every picture}=[baseline=-0.25em]
\newcommand{
\InputIfFileExists{}{}{\input{./tikz/}}
}[1]{
\InputIfFileExists{#1}{}{\input{./tikz/#1}}
}
\newcommand{\InputIfFileExists{}{}{\input{./tikz/}}}[1]{\InputIfFileExists{#1}{}{\input{./tikz/#1}}}
\newcommand{\stikz}[2][1]{\scalebox{#1}{
\InputIfFileExists{#2}{}{\input{./tikz/#2}}
}}
\newcommand{\cstikz}[2][1]{\begin{center}\stikz[#1]{#2}\end{center}}
\newcommand{\presec}{\vspace{-3.5mm}}
\newcommand{\postsec}{\vspace{-3.5mm}}
\title{!-Graphs with Trivial Overlap are Context-Free}
\author{Aleks Kissinger
\institute{Department of Computer Science\\
University of Oxford\\
Oxford, United Kingdom}
\email{aleks.kissinger@cs.ox.ac.uk}
\and
Vladimir Zamdzhiev
\institute{Department of Computer Science\\
University of Oxford\\
Oxford, United Kingdom}
\email{\quad vladimir.zamdzhiev@cs.ox.ac.uk}
}
\begin{document}
\maketitle

\begin{abstract}
String diagrams are a powerful tool for reasoning about composite structures in symmetric monoidal categories. By representing string diagrams as graphs, equational reasoning can be done automatically by double-pushout rewriting. !-graphs give us the means of expressing and proving properties about whole families of these graphs simultaneously. While !-graphs provide elegant proofs of surprisingly powerful theorems, little is known about the formal properties of the graph languages they define. This paper takes the first step in characterising these languages by showing that an important subclass of !-graphs---those whose repeated structures only overlap trivially---can be encoded using a (context-free) vertex replacement grammar.
\end{abstract}


  \presec
  \section{Introduction}\label{sec:intro}
  \postsec

\begin{figure}
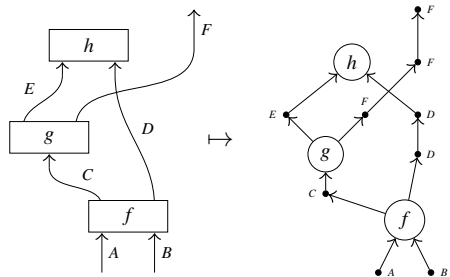

  \centering
  \stikz[0.7]{string_diagram_to_graph.tikz}
  \caption{A string diagram, and its encoding as a string graph}\label{fig:string-diagram-encode}
\end{figure}
String diagrams are essentially directed graphs, but instead of edges they have a more flexible notion of \textit{wires}. Wires can be left open at one or both ends to form inputs and outputs to the diagram, or can be connected to themselves to form a circle. They have a formal semantics given by monoidal category theory~\cite{joyal_street}, and have found applications in many fields. For example: in models of concurrency, they give an elegant presentation of Petri nets with boundary~\cite{Sobocinski:2010aa}, in computational linguistics, they are used to compute compositional semantics for sentences~\cite{LambekvsLambek}, in control theory, they represent signal-flow diagrams~\cite{Baez2014a,Bonchi2015}, and in theoretical physics, they provide the formal language of categorical quantum mechanics~\cite{picturalism}. A \textit{string graph} encodes a string diagram as a typed, directed graph, by replacing the wires with chains of edges containing special dummy vertices called \textit{wire-vertices} (see Fig.~\ref{fig:string-diagram-encode}).
By contrast, the ``real'' vertices, labelled here $f, g$ and $h$ are called \textit{node-vertices}. String graphs---originally introduced under the name ``open graphs'' in~\cite{open_graphs1}---have the advantage that they are purely combinatoric (as opposed to geometric) objects. 
Equational reasoning on string diagrams is done by replacing sub-diagrams. The presence of dummy vertices in string graphs allows this to be done with double-pushout (DPO) graph rewriting. For example, in:
\begin{equation}\label{eq:subst-example}
  \stikz[0.7]{sg-sub.tikz}
\end{equation}
the LHS (shown in green above) is replaced by the RHS (blue), using the common boundary (red). This type of rewriting for string graphs is implemented in the graphical proof assistant Quantomatic~\cite{quanto-cade}.

Often, one wants to reason not just about single string graphs, but entire families of them. However, informal notions of graphs with repetition are ill-suited for automated tools. To address this problem, \textit{!-graphs} (pronounced ``bang-graphs'') were introduced in~\cite{DD1} and formalised for string graphs in~\cite{pattern_graphs}. The idea behind !-graphs is that certain marked subgraphs (along with their adjacent edges) can be repeated any number of times, in a manner somewhat analogous to the Kleene star. These marked subgraphs are called \textit{!-boxes}. For example:
\[  \left \llbracket \stikz[0.7]{bang_graph2.tikz} \right \rrbracket
  =
  \stikz[0.7]{bang_graph_semantics.tikz}
\]
The !-graph in the semantic brackets represents the depicted set of string
graphs.
More precisely, an instance of a !-graph is obtained by repeatedly applying of the two operations EXPAND and KILL on the !-graph:
\begin{equation}\label{eq:bbox-ops}
  \stikz[0.7]{bang_graph2.tikz}
  \quad \overset{\textrm{EXPAND}_b}{\longrightarrow} \quad
  \stikz[0.7]{bang_graph2-exp.tikz}
  \qquad\qquad\qquad
  \stikz[0.7]{bang_graph2.tikz}
  \quad \overset{\textrm{KILL}_b}{\longrightarrow} \quad
  \stikz[0.7]{bang_graph2-kill.tikz}
\end{equation}
until all !-boxes have been eliminated. The set of concrete graphs (i.e. those not containing any !-boxes) obtainable from a !-graph via these operations is called the \textit{language} of the !-graph.

We call the set of languages expressible by !-graphs \textbf{BG}. In this paper, we will compare the expressiveness of this language to context-free vertex replacement grammars. We will focus on confluent neighbourhood-controlled embedding grammars, with directions and edge-labels, i.e.~\textbf{C-edNCE} grammars~\cite{graph_grammar_handbook}.

In arbitrary !-graphs, the !-boxes are allowed to nest inside of each other, but also overlap. The latter can generate patently non-context-free behaviour. Thus, it is natural to consider a restricted set of languages, \textbf{BGTO} of \textit{!-graphs with trivial overlap}. In this paper, we will give an algorithm for producing a C-edNCE grammar that is furthermore \textit{linear} (LIN-edNCE) which reproduces the language of any !-graph with trivial overlap. We therefore show the relationship between graph languages illustrated in Fig.~\ref{fig:venn1}.

\begin{figure}
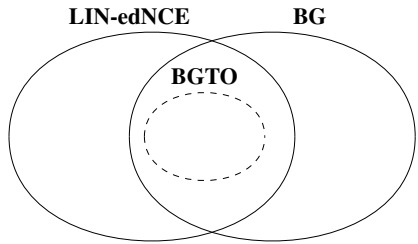
 
\centering
\stikz[0.8]{venn_diagram.tikz}
\caption{Relationship between \textbf{LIN-edNCE}, \textbf{BG} and
\textbf{BGTO}} \label{fig:venn1}
\end{figure}

After introducing some preliminaries on string graphs, !-graphs, and C-edNCE grammars in Section~\ref{sec:bg}, we will give an encoding of !-graphs into LIN-edNCE grammars in Section~\ref{sec:encoding}. In that section, we also demonstrate that \textbf{LIN-edNCE} contains languages not in \textbf{BG} (and hence is strictly larger than \textbf{BGTO}). Finally, we give one conjecture and discuss the prospects for reasoning with context-free string graph grammars in a manner analogous to the !-graph case in Section~\ref{sec:fw}.

  \presec
  \section{Preliminaries}\label{sec:bg}
  \postsec

\begin{definition}[Graph \cite{graph_grammar_handbook}]
  A \textit{graph} over an alphabet of node labels $\Sigma$ and an alphabet of edge
  labels $\Gamma$ is a tuple $H = (V, E, \lambda)$, where $V$ is a finite set
  of nodes, $E \subseteq \{(v, \gamma, w) | v, w \in V, v \not= w, \gamma \in
  \Gamma\}$ is the set of edges and $\lambda : V \to \Sigma$ is the node
  labelling function. The set of all graphs with labels $\Sigma, \Gamma$ is denoted $GR_{\Sigma,\Gamma}$.
\end{definition}

\begin{remark}
Note, that this definition of graphs does not allow for self-loops. This is a standard (and convenient) assumption in the C-edNCE literature. By using this notion of graphs, we do not lose any expressivity for string graphs, because any string graph with a self-loop is wire-homeomorphic (see Definition~\ref{def:wire-homeo}) to a string graph with no self-loops.
\end{remark}

\begin{definition}[String Graph] \label{def:string-graph}
  A \textit{string graph} is a graph labelled by the set $\{ N, W \}$, where vertices labelled $N$ are called \textit{node-vertices} and vertices labelled $W$ are called \textit{wire-vertices}, and the following conditions hold:
  (1) there are no edges directly connecting two node-vertices,
  (2) the in-degree of every wire-vertex is at most one and
  (3) the out-degree of every wire-vertex is at most one.
\end{definition}

We will depict wire vertices as small black dots and node-vertices as larger white circles, as we have already done in Section~\ref{sec:intro}.
  For simplicity, we assume there is only one type of node-vertex, but we could easily capture string graphs like~\ref{fig:string-diagram-encode} by taking the labels to be $\{ N_f, N_g, N_h, W \}$, for example.

We also identify those wire-vertices which form the boundary of a string graph:

\begin{definition}[Inputs, Outputs and Boundary Vertices]
A wire-vertex of a string graph $G$ is called an \textit{input} if it has no incoming
edges. A wire-vertex with no outging edges is called an \textit{output}. We denote
with $\textit{In}(G)$ and $\textit{Out}(G)$ the string graphs with no edges whose
vertices are respectively the inputs and outputs of $G$. The boundary of
$G$ is $\textit{Bound}(G) := \textit{In}(G) \cup \textit{Out}(G)$.
\end{definition}

Just as node-vertices represent the boxes in a string diagram, chains of wire-vertices represent the wires. 

\begin{definition}[Wires]
  A connected chain of vertices where each endpoint is either a boundary or a node-vertex, and all other vertices are wire-vertices is called a \textit{closed wire}. A closed wire minus its endpoints is called the \textit{interior} of a closed wire. A cycle consisting only of wire-vertices is called a \textit{circle}.
\end{definition}

Often, it is more natural to consider an equivalence class of string graphs, up to \textit{wire homeomorphism}.





\begin{definition}[Wire-homeomorphic string graphs] \label{def:wire-homeo}
  Two string graphs $G$ and $G'$ are called \textit{wire-homeo\-morphic}, written $G \sim G'$ if $G'$ can be obtained from $G$ by either merging two adjacent wire-vertices (left) or by splitting a wire-vertex into two adjacent wire-vertices (right) any number of times:
    \[ \stikz[0.7]{two-wires.tikz} \ \ \mapsto \ \ \stikz[0.7]{one-wire.tikz}
      \quad\quad\quad\quad
    \stikz[0.7]{one-wire.tikz} \ \ \mapsto \ \ \stikz[0.7]{two-wires.tikz} \]
\end{definition}

\begin{wrapfigure}{R}{0.4\textwidth}
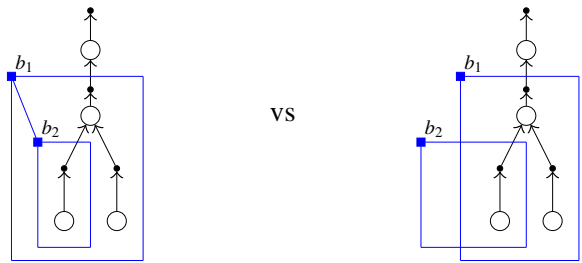

  \centering
  \stikz[0.6]{wire-homeo.tikz}
  \caption{Wire homeomorphism}\label{fig:homeo}
\end{wrapfigure}
That is, two graphs are wire-homeomorphic if the only difference between them is the number of wire-vertices used to represent a wire in the diagram
(see Fig.\ref{fig:homeo}).




Next, we provide definitions for !-graphs and related notions. First, we need to identify which subgraphs are legal to put in !-boxes.

\begin{definition}[Open Subgraph]
  A subgraph $O$ of a string graph $H$ is said to be \textit{open} if it is a full subgraph and furthermore $\textit{In}(H \backslash
  O) \subseteq \textit{In}(H)$ and $\textit{Out}(H \backslash O) \subseteq \textit{Out}(H)$.
\end{definition}

\begin{definition}[!-graph]
  A \textit{!-graph} $H$ is a string graph, a partially ordered set $!(H)$ of \textit{!-boxes}, and an open subgraph $B(b) \subseteq H$ for every $b \in !(H)$ such that $b \leq b' \implies B(b) \subseteq B(b')$.
\end{definition}

The notion of openness ensures that we never put only part of a wire inside a !-box, which would cause wires to ``split'' in the middle as we expand the !-box, which violates the arity conditions for wire-vertices from Definition~\ref{def:string-graph}.


If $b_2 \leq b_1$, this means that $b_2$ is \textit{nested} inside of $b_1$, which we indicate by drawing a line connecting their corners. This is to distinguish from the case where $b_1$ and $b_2$ merely \textit{overlap}.
\[
\stikz[0.7]{bang_graph-nest.tikz}
\qquad\qquad\textrm{vs}\qquad\qquad
\stikz[0.7]{bang_graph-overlap.tikz}
\]

\begin{definition}[Concrete Graph]
A !-graph with no !-vertices is called a \textit{concrete
graph} or simply a string graph.
\end{definition}


A !-graph represents an infinite set of concrete string graphs, each of which is obtained after a finite number of applications of the EXPAND and KILL operations presented below.

\begin{definition}[!-box Operations]
The two primitive !-box operations are as follows:
\begin{itemize}
  \item $\textrm{KILL}_b(G) = G' \backslash B(b)$
  \item $\textrm{EXPAND}_b(G) = G \ \sqcup_{KILL_b(G)}\  G'$
\end{itemize}
where $G'$ has the same underlying string graph as $G$, but with $b$ and all of its children removed from $!(G)$. The expression for EXPAND is the disjoint union of $G$ and $G'$, with the vertices, edges, and !-boxes in the common subgraph $\textrm{KILL}_b(G)$ identified.
\end{definition}

As seen in~\eqref{eq:bbox-ops}, $\textrm{KILL}_b$ removes a !-box $b$ and its contents, whereas $\textrm{EXPAND}_b$ inserts a new copy of the contents of $b$. See~\cite{pattern_graphs} for a more detailed description of these operations.




\begin{definition}[!-graph Language]
  The \textit{language} of a !-graph $G$ is the set of all concrete string graphs obtained by applying a sequence of !-box operations on $G$.
\end{definition}



Note that overlapping !-boxes are not explicitly ruled out. However, they can create some odd non-local behaviour. So, na\"ively, one might want to rule out overlap entirely, but it turns out that certain instances of overlap are more innocent than others.



\begin{definition}[Overlap and Trivial Overlap]
  Given a pair of non-nested !-boxes $b_1$ and $b_2$, we say that $b_1$ and $b_2$ are \textit{overlapping} if $B(b_1) \cap B(b_2) \not = \emptyset$.
  $b_1$ and $b_2$ \textit{overlap trivially} if $B(b_1) \cap B(b_2)$ consists of only the interior of zero or more closed wires, where one endpoint is a node-vertex only in $B(b_1)$ and the other is a node-vertex only in $B(b_2)$.
\end{definition}

\begin{wrapfigure}{r}{0.4\textwidth}
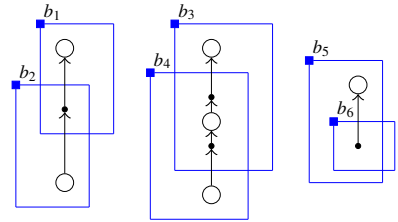

\centering
\stikz[0.65]{bang_relationship.tikz}
\caption{Trivial vs non-trivial overlap}\label{fig:bb_overlap}
\end{wrapfigure}
In particular, $B(b_1) \cap B(b_2)$ can be empty, so trivial overlap generalises no-overlap. In Fig.~\ref{fig:bb_overlap} $b_1$ and $b_2$ overlap trivially, since they only overlap on the interior of a wire connecting node-vertices in the two boxes. $b_3$ and $b_4$ overlap non-trivially, since they overlap on a node-vertex. The most subtle case is $b_5$ and $b_6$, which overlap non-trivially because the shared wire-vertices are not part of a wire whose endpoints are in distinct !-boxes. Of course, all other pairs of !-boxes overlap trivially, as their intersection is just the empty set.

\begin{definition}[\textbf{BGTO}]
  A !-graph where any two non-nested !-boxes overlap trivially is called a \textit{!-graph with trivial overlap}. The set of all languages induced by these !-graphs is called \textbf{BGTO}.
\end{definition}

Next, we introduce C-edNCE graph grammars. Graph grammars are a generalization of context-free grammars for strings. A graph grammar consists of a finite collection of productions which specify instructions on how to generate a family of graphs. Like their context-free string counterparts, context-free graph grammars have better structural, decidability and complexity properties compared to other more expressive mechanisms for graph transformation. We use the same definitions and conventions for graph grammars as presented in \cite{graph_grammar_handbook}, which also describes C-edNCE grammars and their properties in much greater detail.


C-edNCE grammars are built using graphs with embedding. These provide the information needed to replace a non-terminal node with a new graph, and update connections accordingly.

\begin{definition}[Graph with embedding \cite{graph_grammar_handbook}]
  A \textit{Graph with embedding} over labels $\Sigma, \Gamma$ is a pair $(H,
  C),$ where $H$ is a graph over $\Sigma, \Gamma$ and $C \subseteq \Sigma
  \times \Gamma \times \Gamma \times V_H \times \{in, out\}$. $C$ is called a \textit{connection relation} and its elements are called \textit{connection instructions}. The set of all graphs with embedding over $\Sigma, \Gamma$ is denoted by $GRE_{\Sigma, \Gamma}$.
\end{definition}

Graph grammars operate by substituting a graph (with embedding) for a non-terminal node of another graph. Connection instructions are used to introduce edges connected to the new graph based on edges connected to the non-terminal. A connection instruction as $(\sigma, \beta / \gamma, x, d)$ says to add an edge labelled $\gamma$ connected to the vertex $x$ in the new graph, for every $\beta$-labelled edge connecting a $\sigma$-labelled vertex to the non-terminal. $d$ then indicates whether this rule applies to in-edges or out-edges of the non-terminal. More formally:

\begin{definition}[Graph Substitution \cite{graph_grammar_handbook}]
  Let $(H, C_H), (D, C_D) \in GRE_{\Sigma,\Gamma}$ be two graphs with
  embedding, where $H$ and $D$ are disjoint. Let $v \in V_H$ be a node of $H$.
  The \textit{substitution} of $(D, C_D)$ for $v$ in $(H, C_H)$ is denoted by
  $(H, C_H)[v/(D,C_D)]$ and is given by the graph with embedding whose
  components are:
  \begin{align*}
    V \quad=\quad &(V_H - \{v\} )\cup V_D\\
    E \quad=\quad &\{(x,\gamma,y)\in E_H | x \not = v, y\not = v\} \cup E_D\\
    &\cup \{(w,\gamma,x)\ |\ \exists \beta \in \Gamma : (w,\beta,v) \in E_H
      ,(\lambda_H(w), \beta / \gamma, x,in)\in C_D \}\\
    &\cup \{(x,\gamma,w)\ |\ \exists \beta \in \Gamma : (v,\beta,w) \in E_H
      ,(\lambda_H(w), \beta / \gamma, x,out)\in C_D \}\\
    \lambda(x) \quad=\quad
      &\begin{cases}
	\lambda_H(x) & \text{if } x \in (V_H - \{v\})\\
	\lambda_D(x) & \text{if } x \in V_D
      \end{cases}\\
    C \quad=\quad &\{(\sigma,\beta / \gamma,x,d) \in C_H\ |\ x \not = v\}\\
    &\cup \{(\sigma,\beta / \delta,x,d)\ |\ \exists \gamma \in \Gamma :
    (\sigma,\beta / \gamma,v,d) \in C_H,(\sigma, \gamma / \delta,x,d)\in C_D\}
  \end{align*}
\end{definition}

Thankfully, graphs with embedding allow for a simple graphical presentation (see Fig.~\ref{fig:subst}). We draw graphs as in the previous sections, but with the additional convention that nodes with non-terminal labels are drawn as boxes instead of circles. A single connection instruction is drawn as a pair of edges crossing the outer frame. The edge and node types outside of the frame specify neighbouring node(s) which are connected to the non-terminal in the host graph and the edge(s) inside the frame indicate what new edges will be constructed to and from those nodes.



  \begin{figure}
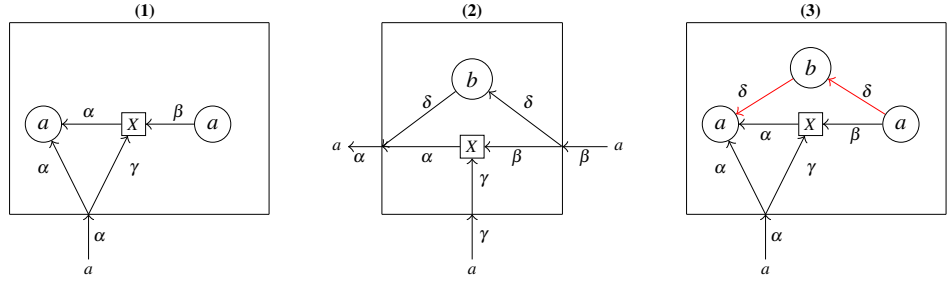

    \cstikz[0.6]{tikz/subst_example1.tikz}
    \caption{
      \textbf{(1)} Graph with embedding $(H,C_H)$
      \textbf{(2)} Graph with embedding $(D,C_D)$
      \textbf{(3)} Result of substitution $(H,C_H)[v/(D,C_D)]$ where $v$ is
        the vertex with non-terminal label in $H$.}\label{fig:subst}
  \end{figure}

Next, we define the concept of an edNCE Graph Grammar. edNCE is an
abbreviation for \textbf{N}eighbourhood \textbf{C}ontrolled \textbf{E}mbedding
for \textbf{d}irected graphs with dynamic \textbf{e}dge relabeling. 

\begin{definition}[edNCE Graph Grammar \cite{graph_grammar_handbook}]
  An \textit{edNCE Graph Grammar} is given by a tuple $G = (\Sigma, \Delta, \Gamma,
  \Omega, P, S)$, where 
    $\Sigma$ is the alphabet of node labels,
    $\Delta \subseteq \Sigma$ is the alphabet of terminal node labels,
    $\Gamma$ is the alphabet of edge labels,
    $\Omega \subseteq \Gamma$ is the alphabet of final edge labels,
    $P$ is a finite set of productions and
    $S \in \Sigma - \Delta$ is the initial nonterminal.
  Productions are of the form $X \rightarrow (D, C)$, where $X \in \Sigma -
  \Delta$ is a non-terminal node and $(D, C) \in GRE_{\Sigma, \Gamma}$ is a
  graph with embedding.
\end{definition}
  We can graphically depict an edNCE grammar by depicting all of its
  productions. A production is drawn exactly like a graph with embedding
  with the addition that we depict its associated non-terminal in the top-left
  part of the frame.
\begin{example}\label{ex:line_graph}
  The following grammar generates the set of all chains of node vertices with an input and no outputs:
  \cstikz[0.7]{unbounded.tikz}
\end{example}

\begin{definition}[Derivation Step \cite{graph_grammar_handbook}]
  For a graph grammar $G=(\Sigma,\Delta,\Gamma,\Omega,P,S)$ and graphs (with
  embedding) $H, H' \in GRE_{\Sigma,\Gamma}$ consider a vertex $v \in V_H$
  and an isomorphic copy $p: X\to(D,C)$ of some production $p' \in P$. We say
  $H \Rightarrow_{v,p} H'$ is a \textit{derivation step} if $\lambda_H(v) = X$ 
  and $H'=H[v/(D,C)]$. If $v$ and $p$ are clear from the context, then we
  write $H \Rightarrow H'$. A sequence of derivation steps 
  $H_0 \Rightarrow_{v_1,p_1} H_2 \Rightarrow_{v_2,p_2} \cdots
  \Rightarrow_{v_n,p_n} H_n$ is called a \textit{derivation}.
  A derivation is called \textit{creative} if $H_0$ and $rhs(p_i)$ are all 
  mutually disjoint. We write $H \Rightarrow_* H'$ if there exists a creative
  derivation from $H$ to $H'$.
\end{definition}

\begin{example}
  A derivation in the above grammar of the string graph with three node vertices:
  \cstikz{tikz/line_derive.tikz}
  where again we color the newly established edges in red.
\end{example}

\begin{definition}[Starting Graph \cite{graph_grammar_handbook}]
  We say that $sn(S,z) \in GR_{\Sigma,\Gamma}$ is a \textit{starting graph} if
  it has only one node given by $z$, its label is $S$, the graph has no edges 
  and no connection instructions.
\end{definition}

\begin{definition}[Graph Grammar Language \cite{graph_grammar_handbook}]
  The graph language induced by a graph grammar
  $G=(\Sigma,\Delta,\Gamma,\Omega,P,S)$ is given by
  $L(G) = \{[H]\ |\ H \in GR_{\Delta,\Omega} \text{ and } sn(S,z)
  \Rightarrow_* H \text{ for some } z\}$, where $[H]$ denotes the equivalence
  class of all graphs which are isomorphic to $H$.
\end{definition}

\begin{definition}[Confluence \cite{graph_grammar_handbook}]
  We say that a graph grammar $G=(\Sigma,\Delta,\Gamma,\Omega,P,S)$ is
  \textit{confluent} if the following holds for every graph $H \in GRE_{\Sigma,
  \Gamma}$:
  if $H \Rightarrow_{u_1,p_1} H_1 \Rightarrow_{u_2,p_2} H_{12}$ and
  $H \Rightarrow_{u_2,p_2} H_2 \Rightarrow_{u_1,p_1} H_{21}$ are creative
  derivations of $G$ with $u_1 \not = u_2$, then $H_{12} = H_{21}$.
\end{definition}

There are simple and easily decidable conditions for determining if an edNCE
grammar is confluent. The class of confluent edNCE grammars is denoted
C-edNCE. 

\begin{definition}[LIN-edNCE grammar \cite{graph_grammar_handbook}]
  An edNCE grammar $G$ is \textit{linear}, or a LIN-edNCE grammar, if for
  every production $X \to (D,C)$, $D$ has at most one nonterminal node.
\end{definition}

At any given point, only one non-terminal can be replaced, so LIN-edNCE grammars are automatically confluent. Hence, they are a subclass of C-edNCE grammars.

  \presec
  \section{!-graphs vs C-edNCE grammars}\label{sec:encoding}
  \postsec

In this section we outline the relationship between !-graph languages and
context-free languages described by C-edNCE grammars and some of their
respective subclasses. In subsection \ref{sec:nested}, we show that the
language induced by any !-graph $H$ with no overlapping !-boxes can be
described by a LIN-edNCE grammar, which can moreover be constructed effectively
from $H$. In subsection \ref{sec:overlap}, we first show that there exists a
!-graph with trivial overlap of !-boxes whose language cannot be directly
represented using any C-edNCE grammar.  Next, we show that the language of any
!-graph $H$ with trivial overlap between !-boxes can be represented by a
LIN-edNCE grammar up to wire-encoding and the
grammar can be effectively constructed from $H$ as well. In subsection
\ref{sec:unrestricted} we show that the classes of languages induced by
(unrestricted) !-graphs and C-edNCE grammars respectively are incomparable.
\presec
\subsection{!-graphs without overlap}\label{sec:nested}
\postsec
The main result of this subsection is a theorem stating that the language
induced by any !-graph with no overlapping of !-boxes can be directly
represented by a LIN-edNCE grammar, which can moreover be generated
effectively.

Before we present the main results in this subsection and the next one, we
prove a series of lemmas which are used in the proof of the two main theorems.
Each lemma describes how to build bigger LIN-edNCE grammars out of smaller ones
which in turn describe the language induced by certain subgraphs of a given
!-graph. All of our constructions are effective in the sense that they can be
performed by a computer.

We introduce some conventions that are used throughout our proofs. Given a
!-graph $H$, its vertices will be $\{v_1,v_2,...,v_n\}$. To these vertices, we
will associate non-final edge labels $\{\alpha_1,\alpha_2,...,\alpha_n\}$ which
will be used in the productions of our grammars. Informally, an edge labeled
with $\alpha_i$ will have as its source or target either the original vertex
$v_i$ or one of its copies. This is used in some productions of our grammars,
so that we can easily refer to all copies of such a vertex at once and connect
them to other vertices.

Throughout the proof, we will use grammars satisfying the following conditions:
\begin{enumerate}
  \item The grammar is in LIN-edNCE, thus every production has at most one
    non-terminal node in it
  \item There is a single final production which is the empty graph
  \item Every sentential form with a non-terminal node is such that all
    terminal vertices are connected with an edge to the non-terminal node (in
    both directions) and every such edge has as label some $\alpha_i$
  \item Every production except the final one has connection instructions
    which specify that both incoming and outgoing edges of type $\alpha_i$ are
    connected to the non-terminal. Graphically, we will depict that using
    bidirectional edges as a shorthand notation for an edge in each
    direction.
  \item For every $i$, there is at most one terminal vertex in the productions
    of a grammar which is incident to an edge with label $\alpha_i$
\end{enumerate}
For convenience, we will refer to a grammar satisfying 1-5 as being in \textit{!-linear form}.

\begin{lemma}
  Given a concrete string graph $H$, there exists a LIN-edNCE grammar
  $\mathcal{G}$ which generates the language $\{H\}$. Moreover, this grammar
  can be effectively generated and is in !-linear form.
  \label{lem:base}
\end{lemma}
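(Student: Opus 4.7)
The plan is to construct $\mathcal{G}$ by walking through the vertices of $H$ in some fixed enumeration $v_1, \ldots, v_n$, introducing a chain of nonterminals $X_0 = S, X_1, \ldots, X_n$. The $i$th derivation step will place the new terminal vertex $v_i$ (with label $\lambda_H(v_i)$) and replace $X_{i-1}$ by $X_i$. The key device is to keep, in every intermediate sentential form, a bidirectional $\alpha_j$-labelled edge between each already-placed $v_j$ and the current nonterminal, where $\alpha_1, \ldots, \alpha_n$ are pairwise distinct non-final edge labels. These act as unique ``addresses'' for the previously placed vertices: when we later need to connect a new vertex to $v_j$, we can refer to $v_j$ via its characteristic label $\alpha_j$, even though several vertices may share the same node label (recall that in a string graph every wire-vertex carries the label $W$).

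Concretely, for each $i \in \{1, \ldots, n\}$ I take $X_{i-1} \to (D_i, C_i)$ where $D_i$ has node set $\{v_i, X_i\}$ with a bidirectional $\alpha_i$-edge between them, and $C_i$ is the union of: (a) for every $j < i$, the two instructions $(\lambda_H(v_j), \alpha_j / \alpha_j, X_i, in)$ and $(\lambda_H(v_j), \alpha_j / \alpha_j, X_i, out)$, which propagate the address edges from $X_{i-1}$ onto $X_i$; and (b) for every edge $(v_j, \gamma, v_i) \in E_H$ with $j < i$, the instruction $(\lambda_H(v_j), \alpha_j / \gamma, v_i, in)$, and for every $(v_i, \gamma, v_j) \in E_H$ with $j < i$, the instruction $(\lambda_H(v_j), \alpha_j / \gamma, v_i, out)$, which create exactly the edges of $H$ incident to $v_i$ and some earlier vertex. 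Finally, take a unique final production $X_n \to (\emptyset, \emptyset)$, which also removes the remaining (non-final) address edges.

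A direct induction on $i$ then shows that after the first $i$ derivation steps the sentential form equals the induced subgraph $H[\{v_1, \ldots, v_i\}]$ together with the nonterminal $X_i$ attached bidirectionally to each $v_j$ ($j \leq i$) by $\alpha_j$-edges; applying $X_n \to (\emptyset, \emptyset)$ then yields exactly $H$, so $L(\mathcal{G}) = \{[H]\}$. The five !-linear form conditions are immediate from the construction: there is at most one nonterminal per production by design, the empty-graph production is unique, the address edges are always present and always propagated bidirectionally via clause~(a), and $\alpha_i$ appears incident to a terminal vertex only in production $i$, and only to $v_i$. Effectiveness is manifest, since the productions are read off directly from $H$.

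The main thing to verify carefully is that clauses~(a) and~(b) really do implement the intended edge structure under the formal substitution rule for graphs with embedding; this is essentially bookkeeping about which host-graph edges get matched by which connection instructions. The reason the construction goes through at all is the use of distinct non-final labels $\alpha_j$: without them, a connection instruction matching only on the node label $\lambda_H(v_j)$ would fire on every previously placed vertex of that label and produce spurious edges, destroying the correspondence with $H$.
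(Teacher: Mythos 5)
Your construction is correct, and it verifiably lands in !-linear form, but it takes a genuinely different (and heavier) route than the paper. The paper's proof uses just two productions: the starting production emits \emph{all} of $H$ in a single step, with every vertex $v_i$ attached bidirectionally to a single nonterminal $F$ by an $\alpha_i$-labelled edge, and the final production $F \to \emptyset$ deletes the nonterminal together with those non-final address edges. Since the whole graph appears at once, no connection instructions are needed to reconstruct the edges of $H$ --- they are simply present in the right-hand side --- and the only thing the $\alpha_i$ edges do is establish the invariant that later lemmas (\ref{lem:connect}, \ref{lem:overlap}) exploit to address copies of $v_i$. Your grammar instead uses $n+1$ productions and rebuilds $E_H$ incrementally via connection instructions keyed on the $\alpha_j$ address labels; this works (your clauses (a) and (b) do match the formal substitution rule, and the distinct $\alpha_j$ prevent spurious matches between vertices sharing a node label), but it forces you to prove an induction that the paper avoids entirely. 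Two minor points: to satisfy the letter of condition 4 of !-linear form as the paper uses it (cf.\ the proof of Lemma~\ref{lem:disjoint}, which explicitly adds ``missing'' $\alpha$-labels to put grammars in the required form), you should also include the vacuous propagation instructions for $\alpha_j$ with $j \ge i$ in production $i$; and you should note explicitly that the edge labels $\gamma$ of $H$ are final labels while the $\alpha_j$ are not, so that the final production leaves exactly a graph in $GR_{\Delta,\Omega}$. Neither point affects correctness.
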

\begin{proof}
  This is done by the following grammar:
  \cstikz[0.7]{super_simple_grammar.tikz}
  where the set of vertices of $H$ is $V_H = \{v_1, v_2,..., v_n\}$
  and each edge with label $\alpha_i$ has as source or target $v_i$ and the
  non-terminal $F$.
\end{proof}

\begin{lemma}
  Given a !-graph $H$ and a !-linear form grammar $\mathcal{G}$ which generates the same
  language as $H$, there exists a
  grammar $\mathcal{G'}$ which generates the same language as the
  following !-graph:
  \cstikz[0.7]{simple.tikz}
  Moreover, $\mathcal{G'}$ can be effectively generated and is in !-linear form.
  \label{lem:nested}
\end{lemma}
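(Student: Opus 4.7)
The plan is to construct $\mathcal{G'}$ by prefixing $\mathcal{G}$ with a new initial non-terminal $S'$ that, at each step, either terminates the derivation (corresponding to a $\textrm{KILL}$ of the outer !-box) or spawns a full generation of one copy of $H$ via $\mathcal{G}$'s productions and then hands control back to a fresh instance of $S'$ (corresponding to an $\textrm{EXPAND}$ followed by further choices). Iterating this choice produces disjoint unions of arbitrarily many graphs drawn independently from $L(\mathcal{G}) = L(H)$, which is exactly the language of the !-graph obtained by enclosing $H$ in a single outer !-box.

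Concretely, I rename every non-terminal of $\mathcal{G}$ to fresh symbols disjoint from $S'$, writing $F'$ for the renamed initial non-terminal, and introduce two productions for $S'$: one whose body is the empty graph (representing $\textrm{KILL}$) and $S' \to sn(F', z)$ (starting one $\textrm{EXPAND}$ iteration). The unique final empty production of $\mathcal{G}$ is replaced by one whose body is $sn(S', z)$, transferring control back to $S'$ at the end of each copy's generation. Each transition is equipped with bidirected connection instructions that propagate the $\alpha_i$ bookkeeping edges to the new non-terminal, so that condition 3 of !-linear form is maintained.

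The main obstacle is ensuring that the connection instructions of $\mathcal{G}$'s productions --- which use $\alpha_i$ to add internal edges of the current copy of $H$ --- do not fire spuriously on vertices of earlier copies and create unwanted cross-copy edges. I handle this by extending the edge alphabet with fresh non-final labels $\beta_1, \ldots, \beta_n$ and arranging the transition $F' \to sn(S', z)$ to relabel each $\alpha_i$ edge of the just-completed copy as $\beta_i$. All subsequent transitions propagate $\beta_i$ unchanged, so earlier copies' vertices remain connected to the current non-terminal (preserving condition 3) but are invisible to the $\alpha_i$-triggered instructions of $\mathcal{G}$ when the next copy is being built. Since no production body contains a terminal vertex incident to a $\beta_i$ edge, conditions 4 and 5 remain satisfied as well.

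The final application of $S' \to $ empty graph simultaneously removes $S'$ together with all incident $\alpha_i$ and $\beta_i$ edges (both are non-final), so the resulting terminal graph consists exactly of the $k$ disjoint copies of graphs from $L(H)$ generated across the $k$ preceding $\textrm{EXPAND}$ iterations. An induction on $k$, combined with the assumed correctness of $\mathcal{G}$ on each iteration, shows that $L(\mathcal{G'})$ matches the desired !-graph language. The construction adds only a bounded number of productions and edge labels to $\mathcal{G}$ and is evidently effective, and by inspection $\mathcal{G'}$ is itself in !-linear form.
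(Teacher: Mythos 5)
Your core construction is the same as the paper's: a fresh start symbol $S'$ that either derives the empty graph (simulating $\textrm{KILL}$) or launches one pass through $\mathcal{G}$ (simulating $\textrm{EXPAND}$), with the final empty production of $\mathcal{G}$ redirected to reproduce $S'$ so that the choice can be iterated. Up to a naming slip (you introduce $F'$ as the renamed \emph{initial} non-terminal but later write the redirected \emph{final} production as $F' \to sn(S',z)$), this is exactly the paper's loop.

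The divergence is your $\alpha_i \mapsto \beta_i$ relabelling of completed copies. The worry it addresses --- that a connection instruction of $\mathcal{G}$ keyed on $\alpha_i$ and emitting a final edge could, on a later iteration, also fire on vertices of earlier copies --- is a genuine subtlety that the paper's prose does not discuss. But the fix conflicts with the conclusion you are required to prove. Condition 3 of !-linear form demands that in every sentential form \emph{every} terminal vertex be joined to the non-terminal by an edge labelled some $\alpha_i$; after your relabelling, the vertices of all but the most recent copy carry $\beta_i$ edges instead, so $\mathcal{G}'$ is not in !-linear form, contrary to your closing claim that condition 3 is preserved. This is not cosmetic: the invariant exists precisely so that Lemmas \ref{lem:connect} and \ref{lem:overlap} can later attach an external node-vertex to \emph{all} copies of a vertex $v_i$ inside the !-box with a single connection instruction matching $\alpha_i$. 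Applied on top of your grammar, such an instruction would reach only the last iteration's copies, and the downstream construction in Theorem \ref{thm:non-overlap} would generate the wrong language. To repair this you would need either to argue that the cross-copy firing cannot occur for the grammars actually passed to this lemma, or to restate the invariant (and adapt Lemma \ref{lem:connect}) so that the $\beta_i$ labels are also tracked; as written, the ``is in !-linear form'' half of the statement is not established.
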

\begin{proof}
  Let's assume that $S$ is the starting production of $\mathcal{G}$, $F$ is the
  final production of $\mathcal{G}$ and that $S'$ and $F'$ are not productions
  of $\mathcal{G}$.

  First, we modify the production $F$ to be the following:
  \cstikz[0.7]{nested_modify.tikz}
  Finally, to the productions of $\mathcal{G}$ we add the following
  productions, where $S'$ is the starting one:
  \cstikz[0.7]{nested_grammar.tikz}
  A derivation
  $S'\Rightarrow S \Rightarrow \cdots \Rightarrow F$
  creates a concrete string graph from the language of $H$, so it is
  simulating a single EXPAND operation applied to the top-level !-box,
  together with a concrete instantiation of the !-boxes in $H$. By
  construction, we can iterate this, thus allowing us to generate multiple
  disjoint concrete graphs, all of which are in the language of $H$.
  A derivation $S' \Rightarrow F'$ simulates the final
  KILL operation applied to the top-level !-box.
\end{proof}

\begin{lemma}
  Given disjoint !-graphs $H,K$ and grammars $\mathcal{G}_1, \mathcal{G}_2$
  which generate the same languages as $H$ and $K$ respectively, then there
  exists a grammar $\mathcal{G}'$ which generates the same language as the
  following !-graph:
  \cstikz[0.7]{disjoint.tikz}
  Moreover, $\mathcal{G'}$ can be effectively generated and is in !-linear form.
  \label{lem:disjoint}
\end{lemma}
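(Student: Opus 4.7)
The plan is to concatenate $\mathcal{G}_1$ and $\mathcal{G}_2$: first run $\mathcal{G}_1$ to produce some $H_0 \in L(H)$, then transition into $\mathcal{G}_2$ to produce a disjoint $K_0 \in L(K)$. By renaming, I may assume the two grammars use disjoint alphabets of non-terminals and disjoint non-final edge labels; write $\alpha_1,\dots,\alpha_n$ for the non-final labels of $\mathcal{G}_1$ (one per vertex of $H$) and $\beta_1,\dots,\beta_m$ for those of $\mathcal{G}_2$, and let $S_k, F_k$ denote the starting and empty-graph-final non-terminals of $\mathcal{G}_k$.

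The combined grammar $\mathcal{G}'$ has starting non-terminal $S_1$ and contains: the non-final productions of $\mathcal{G}_1$ unchanged; a new transition production $F_1 \to (D_T,C_T)$, where $D_T$ consists of the single non-terminal node $S_2$ and $C_T$ forwards every incoming and outgoing $\alpha_i$-edge previously incident to $F_1$ onto $S_2$; the non-final productions of $\mathcal{G}_2$, each augmented with additional pass-through connection instructions that route incoming and outgoing $\alpha_i$-edges to the unique non-terminal in its right-hand side; and finally the production $F_2 \to \emptyset$ unchanged.

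Every derivation in $\mathcal{G}'$ then factors into phases: first $\mathcal{G}_1$'s productions are applied until the sole non-terminal is $F_1$, yielding a sentential form whose terminal component is an arbitrary $H_0 \in L(H)$ by the hypothesis on $\mathcal{G}_1$; the transition production then replaces $F_1$ by $S_2$ while preserving $H_0$ and re-attaching all its $\alpha_i$-edges to $S_2$; next $\mathcal{G}_2$'s modified productions are applied until only $F_2$ remains, disjointly producing some $K_0 \in L(K)$ (the pass-through instructions concern only the disjoint $\alpha_i$-labels, so they do not interfere with $\mathcal{G}_2$'s original behaviour on the $\beta_j$-labels); finally $F_2 \to \emptyset$ strips the remaining non-final edges, leaving $H_0 \sqcup K_0$. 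Conversely, any $H_0 \sqcup K_0$ is realisable by choosing the appropriate sub-derivations in each phase.

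Verifying !-linear form is straightforward: linearity is inherited from $\mathcal{G}_1$ and $\mathcal{G}_2$; the unique empty-graph-final production is $F_2 \to \emptyset$; no production is incident to a fresh terminal vertex bearing an existing $\alpha$ or $\beta$ label, so condition 5 holds; and the pass-through instructions ensure every $\alpha_i$-edge at an $H_0$-vertex remains attached to the current non-terminal after the transition, preserving conditions 3 and 4. Effectiveness is immediate. The subtlest point, and really the only thing to spot, is that condition 3 forces the pass-through augmentation of $\mathcal{G}_2$'s productions for the $\alpha_i$-labels: without it the $\alpha_i$-edges from $H_0$ would be severed mid-derivation, breaking the invariant that every terminal vertex of a non-terminal sentential form is attached to the current non-terminal.
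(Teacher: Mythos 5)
Your construction is essentially the paper's: chain $\mathcal{G}_1$ into $\mathcal{G}_2$ by rerouting the final production $F_1$ into the start symbol $S_2$ and adding pass-through connection instructions so that the already-generated $H$-part stays attached to the current non-terminal throughout phase two. The only (minor) divergence is that the paper also symmetrically augments $\mathcal{G}_1$'s non-final productions with pass-through instructions for the edge labels associated to $K$'s vertices---harmless for the generated language, but needed so that condition 4 of !-linear form holds literally for all labels of the combined graph, whereas you leave $\mathcal{G}_1$'s productions unchanged.
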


\begin{proof}
  Let the vertices of $H$ be $\{v_1,v_2,\ldots,v_k\}$ and let the vertices of $K$
  be $\{v_{k+1},v_{k+2},\ldots,v_n\}$. Also, let $S_i$ and $F_i$ be the starting
  and final productions respectively of $\mathcal{G}_i$. First, we modify each
  production $X$ of $\mathcal{G}_1$, except $F_1$, by adding connection
  instructions for edge labels $\alpha_{k+1},\ldots, \alpha_n$ in the following
  way (left):
  \cstikz[0.7]{disjoint_modify1.tikz}
  where the new additions are colored in red. This doesn't change the language
  of $\mathcal{G}_1$ and is done so that we can put the grammar in the
  required form.
  Similiarly, modify all productions of $\mathcal{G}_2$, except $F_2$, by
  adding to their connection instructions the missing edge labels $\alpha_1,
  \alpha_2, \ldots, \alpha_k$. Finally, modify $F_1$ to be the production
  depicted above (right), so that we can chain together the two grammars.

  The required grammar $\mathcal{G}'$ has as its productions the modified
  productions of $\mathcal{G}_1$ and $\mathcal{G}_2$ with starting production
  $S_1$. A derivation $S_1 \Rightarrow \cdots \Rightarrow F_1$ creates a
  concrete graph from the language of $H$ and a derivation $S_2 \Rightarrow
  \cdots \Rightarrow F_2$ creates a graph from the language of $K$. By
  chaining the two grammars, we simply generate two disjoint concrete graphs,
  one from the language of $H$ and one from the language of $K$, as required.
\end{proof}

\begin{lemma}
  Given !-graph $H$, where $H$ contains a !-box $b$ and given a !-linear form grammar
  $\mathcal{G}$ which generates the same languages as $H$, there exist grammars $\mathcal{G}'$ and $\mathcal{G}''$ which generates
  the same languages as:
  \[ \stikz[0.7]{connected.tikz} \qquad \textrm{and} \qquad
  \stikz[0.7]{connected2.tikz} \]
  respectively, where in both cases, the newly depicted edge (colored in red) is incident to the !-box $b$ in $H$ and the edge is also incident to a node-vertex in $H$ which is not in any !-boxes. Moreover, these grammars can be effectively generated and are in !-linear form.
  \label{lem:connect}
\end{lemma}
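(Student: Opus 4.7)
The plan is to identify the unique production of $\mathcal{G}$ that generates the wire-vertex $w$ where the new edge terminates inside $b$, and to augment its connection relation so that every application of that production simultaneously creates the desired fresh edge to the node-vertex $v_j$ lying outside all !-boxes.

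Let $v_j$ (label $N$, index $j$) denote the outside node-vertex and $w$ (label $W$, index $k$) the wire-vertex in $b$ that is the other endpoint of the new edge; that $w$ must be a wire-vertex follows from the string-graph axioms since $v_j$ is a node-vertex. By condition~5 of !-linear form, there is exactly one production $p \colon X \to (D,C)$ of $\mathcal{G}$ in which $w$ occurs as a terminal, and since $w$ lies in $b$ this production is applied once for each expansion of $b$ in a derivation. On the other hand, $v_j$ lies outside every !-box, so it is introduced exactly once in any derivation and thereafter persists as a terminal, connected to whatever non-terminal is currently present by $\alpha_j$-edges in both directions (conditions~3--4).

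To construct $\mathcal{G}'$ for the case where the new edge is directed from $v_j$ into $w$, I would extend $C$ by one additional connection instruction $(N, \alpha_j / \gamma, w, \mathrm{in})$, where $\gamma$ is the required final edge label. By the definition of graph substitution, each time $p$ is used to rewrite a non-terminal, every $\alpha_j$-edge coming into that non-terminal from an $N$-labelled vertex contributes a $\gamma$-edge into the newly created copy of $w$; by the uniqueness guaranteed by condition~5 (applied to $j$), that neighbour is necessarily $v_j$, so exactly one fresh edge from $v_j$ to each new copy of $w$ is produced. The grammar $\mathcal{G}''$, which handles the reverse direction, is built identically except that the added instruction is $(N, \alpha_j / \gamma, w, \mathrm{out})$.

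The remaining work is verification rather than a true obstacle. Conditions 1--5 of !-linear form are preserved because we have only added a connection instruction and introduced no new terminals bearing $\alpha_i$-edges; and the language equality follows because each EXPAND/KILL schedule for $b$ in the new !-graph corresponds to exactly the same sequence of applications of $p$ as in the old, each of which now produces a new copy of $w$ together with one new incident edge to $v_j$---precisely matching the duplication behaviour of edges incident to a !-box. The subtlest point, and the one that requires care in the formal write-up, is ensuring no spurious edges arise: this comes down to the uniqueness of the vertex bearing label $\alpha_j$ in any sentential form, which in turn depends crucially on $v_j$ lying outside every !-box so that it is never itself duplicated.
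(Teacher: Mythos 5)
Your proposal is correct in strategy and is essentially the paper's argument: use the $\alpha$-indexed edges that !-linear form maintains between every terminal vertex and the unique non-terminal to add one new connection instruction that manufactures the required edge, with conditions 3--5 guaranteeing that exactly the right edges (and no spurious ones) are created. The one substantive divergence is \emph{which} production you modify. You attach $(N,\alpha_j/\gamma,w,\mathrm{in})$ (resp.\ $\mathrm{out}$) to the production that introduces the wire-vertex $w$, so each fresh copy of $w$ reaches back to $v_j$ through the $\alpha_j$-edge; the paper instead modifies the unique production containing the node-vertex $v_j$, connecting $v_j$ at its single moment of creation to all already-existing copies of the wire-vertex via their $\alpha_i$-edges. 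These dual choices are not interchangeable: connection instructions only create edges to vertices that already exist in the sentential form, so your version silently assumes $v_j$ is introduced \emph{before} any application of the production creating copies of $w$, while the paper's assumes the opposite order. The correct choice is dictated by how Lemma~\ref{lem:disjoint} chains the sub-grammar for the !-box with the sub-grammar for the remainder in the induction of Theorem~\ref{thm:non-overlap}; since !-linear form alone does not fix this order, you should state the ordering assumption explicitly (or carry it as an invariant of the induction), as otherwise a copy of $w$ created before $v_j$ would never acquire its edge. Apart from this, your verification of language equality and preservation of !-linear form matches the paper's (largely implicit) reasoning.
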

\begin{proof}
  In both cases, for the newly depicted edge, identify the wire-vertex as $v_i$
  and the node-vertex as $v_j$. To get $\mathcal{G}'$, identify the unique production $X$ of
  $\mathcal{G}$, such that $X$ contains a node-vertex incident to an edge with
  non-final label $\alpha_j$. Then add to its connection instructions a new edge in the following way (left):
  \[
  \stikz[0.7]{connect_modify.tikz}
  \qquad\qquad\qquad
  \stikz[0.7]{connect_modify2.tikz}
  \]
  where the red-colored edge is the new addition. To get the grammar
  $\mathcal{G}''$, follow a similar same procedure (shown on the right above). This modification has the effect that we connect all copies of the wire vertex to the single node vertex (in the appropriate direction), which is the only change required compared to the concrete graphs of $H$.
\end{proof}

\begin{theorem}
  Given a !-graph $H$ such that it doesn't have any overlapping !-boxes,
  there exists a LIN-edNCE grammar $G$ which generates the same language as
  $H$. Moreover, this grammar can be effectively generated and is in !-linear form.
  \label{thm:non-overlap}
\end{theorem}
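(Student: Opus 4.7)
The plan is to proceed by induction on $|!(H)|$, the number of \mbox{!-boxes} of $H$, combining the four preceding lemmas at each stage. For the base case $|!(H)| = 0$, the !-graph $H$ is already a concrete string graph, so Lemma~\ref{lem:base} produces a !-linear form grammar whose language is exactly $\{H\}$.

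For the inductive step, I would fix a $\leq$-maximal (outermost) \mbox{!-box} $b \in !(H)$ and split $H$ into two strictly smaller !-graphs. Let $H_b$ be the !-graph whose underlying string graph is $B(b)$ and whose !-boxes are $\{b' \in !(H) : b' < b\}$, and let $H'$ be obtained by applying $\mathrm{KILL}_b$ to $H$, so its !-boxes are $\{b' \in !(H) : b' \not\leq b\}$. Because $H$ has no overlapping !-boxes, every !-box other than $b$ is either nested inside $b$ or entirely disjoint from $B(b)$, so this decomposition is clean and both $H_b$ and $H'$ are bona fide !-graphs with no overlap and with strictly fewer !-boxes than $H$. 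By the inductive hypothesis we obtain !-linear form grammars $\mathcal{G}_b$ and $\mathcal{G}'$ for $H_b$ and $H'$ respectively.

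We then assemble $\mathcal{G}$ from $\mathcal{G}_b$ and $\mathcal{G}'$ in three steps. First, apply Lemma~\ref{lem:nested} to $\mathcal{G}_b$ to obtain a !-linear form grammar $\tilde{\mathcal{G}}_b$ for the !-graph obtained by placing $H_b$ inside one fresh top-level !-box (namely, a copy of $b$). Second, apply Lemma~\ref{lem:disjoint} to $\tilde{\mathcal{G}}_b$ and $\mathcal{G}'$, producing a !-linear form grammar for the disjoint union of $\tilde{\mathcal{G}}_b$'s graph with $H'$. This disjoint union differs from $H$ exactly by the edges of $H$ that cross the boundary of $b$. Third, reintroduce each such crossing edge by a single application of Lemma~\ref{lem:connect}; this lemma is applicable because the relevant outside endpoint is, as explained below, a node-vertex lying outside every !-box of the current intermediate graph. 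The final result is a !-linear form grammar generating $L(H)$.

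The main obstacle, and the step that truly uses the absence of overlap together with the openness condition, is justifying the shape of the crossing edges so that Lemma~\ref{lem:connect} really does apply. Specifically, one must show that every edge in $H$ with exactly one endpoint in $B(b)$ runs from a wire-vertex strictly inside $b$ to a node-vertex strictly outside every !-box. Openness of $B(b)$ rules out a wire-vertex outside $b$ being adjacent to any vertex of $b$, since such a wire-vertex would acquire a missing input or output after $\mathrm{KILL}_b$; iterating this argument along any wire leaving $b$ forces all of its wire-vertices to be contained in $b$, leaving only the two endpoint node-vertices outside. Finally, if the outside endpoint node-vertex belonged to some other !-box $b''$, then $b''$ would be non-nested with $b$ yet linked to it by a wire whose interior lies entirely in $b$, which contradicts either openness of $B(b'')$ or the no-overlap hypothesis. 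Hence each crossing edge has precisely the form that Lemma~\ref{lem:connect} requires, completing the induction.
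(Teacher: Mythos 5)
Your proposal is correct and follows essentially the same route as the paper's proof: induction on the number of !-boxes, splitting off a top-level !-box's contents from the rest, and combining Lemmas~\ref{lem:base}, \ref{lem:nested}, \ref{lem:disjoint} and \ref{lem:connect}, with the same openness/no-overlap argument showing each crossing edge joins a wire-vertex inside the box to a node-vertex outside all !-boxes. The only difference is cosmetic --- you assemble the grammar bottom-up where the paper peels the !-graph apart top-down.
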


\begin{proof}
  We present a proof by induction on the number of !-boxes of $H$.

  For the base case, if $H$ has no !-boxes, then lemma \ref{lem:base} completes
  the proof.

  For the step case, pick any top-level !-box and let's consider the full
  subgraph of $H$ it induces. Call this subgraph $K$. Any vertex $v$ of $H
  \backslash K$ which is adjacent to $K$ must be a node-vertex, because
  otherwise this would violate the openness condition of !-boxes.
  Let $w \in K$ be a wire-vertex that $v$ is adjacent to and let $e$ be the
  edge connecting $v$ to $w$. 
  \cstikz[0.7]{thm_no_overlap.tikz}
  If $v$ is in some !-box $b$, then the openness condition of !-boxes
  implies that $w$ must also be in $b$. However, we have assumed that
  $H$ does not contain overlapping !-boxes, so this is not possible and thus
  $v$ is not in any !-boxes. Therefore, we can use lemma \ref{lem:connect} to
  reduce the problem to showing that we can effectively generate a grammar for
  $H \backslash e$. Similarly, by applying the same lemma multiple times, we
  can reduce the problem to showing that we can effectively generate a grammar
  for the !-graph consisting of the disjoint !-graphs $K$ and $H \backslash K$.
  Applying lemma \ref{lem:disjoint} and the induction hypothesis then reduces
  the problem to showing the theorem for $K$.  Finally, we can apply lemma
  \ref{lem:nested} to $K$ and then the induction hypothesis to complete the
  proof.
\end{proof}

\subsection{!-graphs with trivial overlap}\label{sec:overlap}
\postsec

  \begin{wrapfigure}{r}{0.2\textwidth}
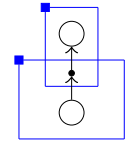

  \centering
  \stikz[0.7]{counter_bang.tikz}
  \caption{!-graph with no direct encoding in
  \textbf{C-edNCE}}\label{fig:bang_k_m_n}
  \end{wrapfigure}
We begin by providing an example of a !-graph with trivial overlap of !-boxes
which cannot be directly represented using any C-edNCE grammar. Then, we prove
the second main theorem of this work which shows that the language of any
!-graph with trivial overlap of !-boxes can be represented by a LIN-edNCE
grammar up to wire-encoding. The construction
of the grammar is also effective.

\begin{proposition}\label{prop:generative}
  The generative power of C-edNCE grammars and Hyperedge Replacement grammars
  on string graphs is the same.
\end{proposition}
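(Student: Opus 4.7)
The plan is to reduce the proposition to the classical Engelfriet--Rozenberg equivalence between HR and C-edNCE on graph languages of bounded degree, and then to exploit the specific structure of string graphs to make that reduction go through even though node-vertices can have arbitrary degree.

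First I would recall the standard fact: HR grammars and C-edNCE grammars generate exactly the same graph languages when restricted to graph classes of bounded degree, whereas in general C-edNCE is strictly more powerful. The containment HR $\subseteq$ C-edNCE always holds and is the easy direction (any HR production can be simulated by a C-edNCE production that encodes each tentacle as a non-terminal edge connection rule), so the real work lies in showing that any string graph language given by a C-edNCE grammar is in fact an HR language.

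Next I would observe the crucial structural property of string graphs: every wire-vertex has both in-degree and out-degree at most one, so its total degree is at most two. The only source of unbounded degree is at node-vertices. I would then pass to a hypergraph view in which each node-vertex of a string graph becomes a hyperedge whose tentacles are precisely its incident wire-vertices, and wire-vertices become the ordinary vertices of the hypergraph. Because node-vertices are never adjacent to other node-vertices (part~(1) of Definition~\ref{def:string-graph}), this encoding is well-defined and invertible, and the corresponding hypergraphs have bounded vertex-degree (at most two). Under this encoding HR grammars on the hypergraph side correspond exactly to grammars generating the original string graphs, and the bounded-degree condition needed to invoke the Engelfriet--Rozenberg theorem is satisfied on the vertex side.

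The main obstacle, I expect, is justifying that the encoding commutes with derivations in both formalisms: given a C-edNCE grammar whose language consists of string graphs, one must show that the connection instructions can be normalised so that edges incident to a generated node-vertex always go to wire-vertices (never directly to other node-vertices), and that the resulting productions translate into HR productions over the hypergraph encoding. I would handle this by first putting the C-edNCE grammar into a normal form in which every terminal production respects the bipartite node-vertex/wire-vertex structure, arguing that any grammar whose language happens to lie in the string-graph fragment can be refined in this way without changing its language. Once in that normal form, the Engelfriet--Rozenberg translation applies tentacle-by-tentacle at each node-vertex, yielding an HR grammar; the converse direction is immediate by the standard simulation of hyperedges by non-terminal subgraphs in C-edNCE. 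Combining the two directions gives the stated equality of generative power on string graphs.
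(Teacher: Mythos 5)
There is a genuine gap in the hard direction (C-edNCE $\Rightarrow$ HR), and it sits exactly in the encoding you rely on. You turn each node-vertex into a hyperedge whose tentacles are its incident wire-vertices, so that the resulting hypergraphs have vertex degree at most two, and you then want to invoke the bounded-degree version of the HR/VR equivalence. But in an HR grammar the rank of a terminal hyperedge is fixed by its label and the label alphabet is finite, whereas node-vertices can have unbounded degree across a C-edNCE string-graph language: for instance, the language of all ``stars'' consisting of a single node-vertex with $n$ pendant wires is generated by a LIN-edNCE grammar. Your encoding would therefore require terminal hyperedges of unbounded rank, which no HR grammar can supply. Moreover, the bounded-degree theorem you invoke is a statement about \emph{graph} languages generated by VR grammars; there is no off-the-shelf version for ``hypergraph languages of bounded vertex degree'' (C-edNCE does not even generate hypergraphs), so even setting the rank problem aside you would have to re-prove the equivalence for the encoded objects, which is essentially the entire difficulty. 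The normalisation step you flag as ``the main obstacle'' is thus not the obstacle that kills the argument.

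The hypothesis that actually does the work here is not bounded degree but exclusion of a complete bipartite subgraph. Since a wire-vertex has total degree at most two and node-vertices are pairwise non-adjacent (Definition~\ref{def:string-graph}), no string graph contains $K_{3,3}$ as a subgraph: every vertex of such a subgraph would need degree at least three, forcing all six vertices to be node-vertices, which cannot be mutually adjacent. The main result of \cite{hrl_equal_vr} states that a C-edNCE language excluding some $K_{n,n}$ as a subgraph is an HR language, and HR $\subseteq$ C-edNCE always holds; together these give the proposition in one line, which is exactly the paper's proof. If you want to salvage your argument, replace the bounded-degree theorem by this $K_{n,n}$-free version and drop the hypergraph encoding entirely.
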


\begin{proof}
  The graph $K_{3,3}$ is not a subgraph of any string graph. Then, the
  proposition follows immediately from the main result in \cite{hrl_equal_vr}.
\end{proof}

\begin{proposition}
  The language induced by the !-graph with trivial overlap of !-boxes shown
  in Fig.~\ref{fig:bang_k_m_n} cannot be directly described using any C-edNCE
  grammar.
\end{proposition}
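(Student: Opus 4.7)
The plan is to combine Proposition~\ref{prop:generative} with the classical fact that every Hyperedge Replacement (HR) language has uniformly bounded treewidth. First I would unpack what the !-graph in Fig.~\ref{fig:bang_k_m_n} generates: two trivially overlapping !-boxes $b_1$ and $b_2$, each containing a single node-vertex, joined by a wire whose interior sits in the overlap. Applying the !-box operations so that $b_1$ is expanded to $m$ concrete node-vertices and $b_2$ to $n$ concrete node-vertices produces a string graph $S_{m,n}$ that is a subdivision of the complete bipartite graph $K_{m,n}$, with a wire-vertex on each of the $mn$ subdivided edges. Hence the language of the !-graph contains the infinite family $\{S_{n,n} \mid n \ge 1\}$.

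Next, I would invoke the well-known result (see \cite{graph_grammar_handbook}) that every HR language has uniformly bounded treewidth: there exists a constant $c$, depending only on the grammar, such that $\operatorname{tw}(G) \le c$ for every $G$ in the language. To show $\{S_{n,n}\}_{n\ge 1}$ violates this, I would contract the $n^2$ wire-vertices of $S_{n,n}$, exhibiting $K_{n,n}$ as a minor, and then apply minor-monotonicity of treewidth to obtain $\operatorname{tw}(S_{n,n}) \ge \operatorname{tw}(K_{n,n}) = n$. Letting $n \to \infty$, no single $c$ bounds the treewidth across the language, so no HR grammar generates it.

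Finally, Proposition~\ref{prop:generative} says that on string graphs the classes of HR and \textbf{C-edNCE} languages coincide, so the language fails to be a \textbf{C-edNCE} language as well, which is exactly the stated conclusion.

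The delicate point, and the reason the proposition must insist on \emph{direct} description, is that the argument must work on $S_{m,n}$ as actually generated --- wire-vertices included --- not on the abstract $K_{m,n}$. Using minor-monotonicity of treewidth rather than a direct subgraph argument is what renders the wire-vertices harmless in the obstruction: $K_{n,n}$ reappears as a minor after contracting every wire-vertex. I expect no deeper obstacle beyond pinning down a reference for the bounded-treewidth theorem for HR in the precise directed, edge-labelled form used here, which is standard in \cite{graph_grammar_handbook}.
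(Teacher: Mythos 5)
Your proposal is correct, and it reaches the contradiction by a genuinely different route than the paper. Both arguments share the same skeleton: identify the language as the family of wire-subdivided complete bipartite graphs $K_{m,n}$, then use Proposition~\ref{prop:generative} to transfer the question from \textbf{C-edNCE} to hyperedge replacement grammars on string graphs. Where the paper then appeals to the pumping lemma for hyperedge-replacement languages \cite{pumping} (stated only as ``a simple application ... yields a contradiction'', with the details left to the reader), you instead invoke the uniform bounded-treewidth property of HR languages and observe that contracting the $n^2$ wire-vertices of $S_{n,n}$ exhibits $K_{n,n}$ as a minor, so $\operatorname{tw}(S_{n,n}) \ge \operatorname{tw}(K_{n,n}) = n$ is unbounded over the family. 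Your route has the advantage of being essentially self-verifying: minor-monotonicity of treewidth disposes of the wire-vertices in one line, whereas a careful pumping argument on the subdivided graphs would need to check that every admissible pumping decomposition leaves the language (a step the paper does not spell out). The paper's route is marginally more elementary in that it only uses the pumping lemma already cited elsewhere in the HR literature rather than the treewidth characterisation, but both are standard results from the same sources, and the two proofs are of comparable length. One small point worth making explicit in your write-up: treewidth here is taken of the underlying undirected unlabelled graph, and the bounded-treewidth theorem for HR languages applies verbatim in the directed, edge-labelled setting since the derivations only ever decorate a bounded-treewidth skeleton; this is standard but deserves a sentence.
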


\begin{proof}
  The language of the above !-graph is given by:
  \begin{align*}
  L := \left\{ \stikz[0.65]{k_m_n.tikz} \right\}
  \end{align*}
  In other words, it's the set of complete bipartite graphs $K_{m,n}$ where
  each node-vertex is connected to other node vertices through a single wire
  vertex.

  Let us assume for contradiction that there exists a C-edNCE grammar which
  generates $L$. From proposition \ref{prop:generative}, it follows that $L$
  can be described using a hyperedge replacement grammar. Then, a simple
  application of the pumping lemma for hyperedge-replacement languages
  \cite{pumping} yields a contradiction.
\end{proof}

Note, however, that the graph language of complete bipartite graphs $K_{m,n}$
(without wire vertices in-between) can be generated by a LIN-edNCE grammar.
Moreover, this language is wire-homeomorphic to the language in the above
proposition. However, removing the wire vertices from our C-edNCE language brings another
complication -- we need to account for parallel edges between node vertices by
introducing additional labels on the parallel edges, because in most
of the literature, C-edNCE grammars do not allow for parallel edges of the
same type.

Using these two ideas, we define the notion of wire-encoding and then show
the main result of this subsection.

\begin{definition}[Wire-encoding]
We say that two graphs $H$ and $H'$ are equal up to \textit{wire-encoding},
if we can get one from the other by replacing every edge with special label
$\beta_k$ by a closed wire with endpoints the source and target of the original
edge.
\cstikz{wire_encoding.tikz}
We also say that two graph languages $L$ and
$L'$ are equal up to wire-encoding, if there exists a bijection $f: L \to L'$,
s.t. for every $H \in L$, $H$ and $f(H)$ are equal up to wire-encoding.
\end{definition}

\begin{lemma}
  Given !-graph $H$ which contains non-nested !-boxes $b_1$ and $b_2$ and given a !-linear form
  grammar $\mathcal{G}$ which generates the same languages as $H$, there exists a grammar $\mathcal{G}'$ which
  generates the same language, up to wire-encoding,
  as the following !-graph:
  \cstikz[0.7]{overlap.tikz}
  where the new additions are coloured in red and the newly depicted wire-vertex
  is in both $b_1$ and $b_2$. Moreover, this grammar can be effectively
  generated and is in !-linear form.
  \label{lem:overlap}
\end{lemma}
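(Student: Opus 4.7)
The plan is to mirror the strategy of Lemma~\ref{lem:connect}, but instead of attaching a single edge to a single terminal vertex we install connection instructions that, across the entire derivation, generate a complete bipartite graph of specially labelled edges between all copies produced by the $b_1$- and $b_2$-expansions. These special edges are exactly the $\beta_k$-labelled edges of the wire-encoding.

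First I would identify the two node-vertex endpoints of the new closed wire: let $v_j$ be the endpoint lying only in $b_1$ and $v_k$ the endpoint lying only in $b_2$; these exist precisely because the overlap is trivial. Pick a fresh non-final edge label $\beta_k$ to play the role of the wire-encoding label. By condition~(5) of the !-linear form, there is a unique production $X_j \to (D_j, C_j)$ of $\mathcal{G}$ whose right-hand side contains a terminal vertex incident to an $\alpha_j$-edge, namely the production generating copies of $v_j$, and similarly a unique production $X_k \to (D_k, C_k)$ for $v_k$.

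Construct $\mathcal{G}'$ by augmenting $C_j$ with a connection instruction of the form $(N, \alpha_k / \beta_k, v_j, d)$ and symmetrically augmenting $C_k$ with $(N, \alpha_j / \beta_k, v_k, d)$, where the direction $d \in \{\mathit{in}, \mathit{out}\}$ is chosen to match the orientation of the wire in the !-graph. Because every sentential form of a grammar in !-linear form has the property that every already-generated copy of $v_k$ sits on a bidirectional $\alpha_k$-edge to the current non-terminal, each fresh application of $X_j \to (D_j, C_j)$ now emits a $\beta_k$-edge from the newly-created copy of $v_j$ to every pre-existing copy of $v_k$; symmetrically, each application of $X_k \to (D_k, C_k)$ emits a $\beta_k$-edge to every pre-existing copy of $v_j$. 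A straightforward induction on derivation length shows that the cumulative effect is the complete bipartite $\beta_k$-connection between the sets of $v_j$- and $v_k$-copies, independently of the interleaving order of $b_1$- and $b_2$-expansions.

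Decoding the resulting terminal graphs under the wire-encoding then produces exactly the concrete instantiations of the augmented !-graph: each $\beta_k$-edge becomes a closed wire through one copy of the new wire-vertex, reproducing the $K_{m,n}$-style connectivity that motivated wire-encoding in the first place. Checking the remaining !-linear-form conditions is routine: no new non-terminal is added, the single-final-production discipline is untouched, the bidirectional $\alpha_i$ pattern is preserved, and uniqueness of the $\alpha_i$-incident terminal vertex per index is unaffected. The conceptual core of the argument, and its only real subtlety, is convincing oneself that the $\alpha_i$-tracking edges together with the symmetric augmentation of both $C_j$ and $C_k$ genuinely produce every bipartite $\beta_k$-edge exactly once and in the correct direction, irrespective of the order in which the $b_1$- and $b_2$-expanding productions fire; this inductive bookkeeping is the technical heart of the lemma.
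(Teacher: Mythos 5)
Your construction is essentially the paper's: you exploit the $\alpha_i$-tracking edges guaranteed by the !-linear form to add connection instructions that join all copies of the two node-vertex endpoints directly with specially labelled $\beta$-edges, omitting the shared wire-vertex entirely and recovering it only through the wire-encoding. One slip needs fixing: you declare $\beta_k$ to be a \emph{non-final} edge label, but these edges must survive into the terminal graphs (they are exactly what the wire-encoding decodes back into closed wires), so $\beta_k$ must be a \emph{final} label --- if it were non-final, no graph containing such an edge would lie in $L(\mathcal{G}')$ at all. Beyond that, the only substantive difference is that the paper adds the new connection instruction to a single production (the one generating copies of the target node-vertex), which suffices only because the grammars assembled by Lemmas~\ref{lem:nested} and~\ref{lem:disjoint} are linearly chained, so all copies of one endpoint exist before any copy of the other is created; your symmetric modification of both productions is order-independent, and each bipartite pair still receives exactly one $\beta$-edge (created at the moment the later of the two copies appears), so your version is if anything slightly more robust than the printed proof. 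The remaining checks --- preservation of conditions 1--5 of the !-linear form and the correct choice of direction $d$ --- are as routine as you claim.
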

\begin{proof}
  For the newly depicted edges, identify the source node-vertex as $v_i$ and
  the target node-vertex as $v_j$.  To get the desired grammar $\mathcal{G}'$,
  identify the unique production $X$ of $\mathcal{G}$, such that $X$
  contains a node incident to an edge with non-final label $\alpha_j$. Then,
  add to its connection instructions a new edge with (final) label $\beta_k$:
  \cstikz[0.7]{overlap_final.tikz}
  where the new addition is coloured in red. Note, that with this
  construction, we are not creating the depicted wire-vertex, nor any of its
  copies. We are connecting all copies of the node-vertex $v_i$ to all copies
  of the node-vertex $v_j$ directly with edges labelled with $\beta_p$.

\end{proof}

\begin{theorem}
  Given a !-graph $H$ such that the only overlap between !-boxes in $H$ is
  trivial, then there exists a LIN-edNCE grammar $G$ which generates the same
  language as $H$, up to wire-encoding.
  Moreover, this grammar can be effectively generated and is in !-linear form.
  \label{thm:overlap}
\end{theorem}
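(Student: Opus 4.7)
The plan is to reduce the theorem to the no-overlap case of Theorem~\ref{thm:non-overlap} by first detaching every trivially-overlapping wire from $H$, building a grammar for the resulting overlap-free !-graph, and then reintroducing the detached wires one at a time using Lemma~\ref{lem:overlap}, each time absorbing the added wire into the grammar as a wire-encoded edge. This mirrors in spirit the use of Lemmas~\ref{lem:connect},\ref{lem:disjoint},\ref{lem:nested} in the proof of Theorem~\ref{thm:non-overlap}, but the combinatorial decomposition is done globally rather than by induction on the !-box structure.

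More concretely, let $\mathcal{W}$ be the finite collection of closed wires in $H$ whose interiors constitute the trivial overlaps between pairs of non-nested !-boxes. By the definition of trivial overlap, each $W \in \mathcal{W}$ connects a node-vertex $v \in B(b_\alpha)$ only to a node-vertex $v' \in B(b_\beta)$ only, while every interior wire-vertex of $W$ lies in $B(b_\alpha) \cap B(b_\beta)$. Form $H_0$ by deleting from $H$ the entire interior of each $W \in \mathcal{W}$ together with the incident edges, keeping the node-vertex endpoints and restricting the !-box assignments to surviving vertices. I will check that $H_0$ is a valid !-graph with pairwise-disjoint non-nested !-boxes: openness of each $B(b)_{H_0}$ follows from that of $B(b)_H$ because every deleted wire-vertex lay in $B(b)_H$ whenever it lay in $B(b)$ at all, so $H_0 \setminus B(b)_{H_0}$ coincides with $H \setminus B(b)_H$ and no new boundary wire-vertex is introduced. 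With all shared wire-vertices gone, any previous trivial overlap is now empty, and Theorem~\ref{thm:non-overlap} yields, effectively, a LIN-edNCE grammar $\mathcal{G}_0$ in !-linear form generating the language of $H_0$.

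Enumerate $\mathcal{W} = \{W_1, \ldots, W_m\}$ and, for each $k$, let $H_k$ be obtained from $H_{k-1}$ by inserting one new wire-vertex located simultaneously in the two !-boxes $b_\alpha, b_\beta$ originally containing the interior of $W_k$, together with the two edges connecting that wire-vertex to the relevant node-vertex endpoints. Applying Lemma~\ref{lem:overlap} to the pair $(H_{k-1}, \mathcal{G}_{k-1})$, with a distinct fresh label $\beta_{p_k}$, produces a grammar $\mathcal{G}_k$ in !-linear form generating the language of $H_k$ up to wire-encoding. After $m$ applications, $H_m$ agrees with $H$ up to wire-homeomorphism — each reinstated wire has only one interior wire-vertex rather than possibly many, which is exactly the slack absorbed by wire-encoding — so $\mathcal{G}_m$ generates the language of $H$ up to wire-encoding. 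Effectiveness follows because each step is effective.

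The main obstacle will be bookkeeping at the boundaries: verifying that each intermediate $H_k$ is admissible, in particular that the reinserted wire-vertex lies in two !-boxes that still overlap only trivially in $H_{k-1}$ (the hypothesis of Lemma~\ref{lem:overlap}), that openness is preserved, and that no new non-trivial overlap is accidentally created. All of these follow from the trivial-overlap hypothesis on the original $H$ because wire-vertices are reinserted only at sites where $H$ itself had trivial overlap. A secondary point is that the ``up to wire-encoding'' relation composes cleanly across the $m$ applications, which it does since each $\beta_{p_k}$ is a fresh non-final label, so later invocations of Lemma~\ref{lem:overlap} cannot disturb the encodings introduced in earlier steps.
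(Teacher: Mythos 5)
Your proposal is correct and follows essentially the same route as the paper: both reduce to the no-overlap case of Theorem~\ref{thm:non-overlap} and rely on Lemma~\ref{lem:overlap} to account for the trivially-overlapping wire interiors, with the slack absorbed by wire-encoding. The only difference is organisational --- you strip all overlapping wires globally and invoke Theorem~\ref{thm:non-overlap} as a black box before reinserting them one at a time, whereas the paper interleaves the stripping with the induction on !-boxes, removing at each step only the wire-vertices that overlap the currently chosen top-level box.
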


\begin{proof}
  The proof is the same as for the previous theorem, except that we have to
  consider an additional case, namely, when the node-vertex $v$ is in a !-box.
  \cstikz[0.7]{thm_overlap.tikz}
  In this case, the wire-vertex $w$ is in both !-boxes and we can apply lemma
  \ref{lem:overlap} to reduce the problem to showing that $H \backslash w$ can
  be handled. If we set $W$ to be the set of all wire vertices which overlap
  with $b$ and another !-box, then applying the same lemma multiple times
  reduces the problem to showing that $H \backslash W$ can be handled. However,
  we have assumed that only trivial overlap between !-boxes exists in $H$ and
  therefore in $H \backslash W$ there is no overlap between $b$ and any other
  !-boxes. Then, the proof can be finished using the same arguments as in
  the previous theorem.
\end{proof}

\subsection{The power of context-free languages}\label{sec:unrestricted}
\postsec

In this subsection, we will show that there exists a LIN-edNCE language
which cannot be induced by any (unrestricted) !-graph.
We start by providing some definitions.

\begin{definition}[Maximum distance]
  For a graph $G$ and vertices $v,u \in G$, the \textit{distance} between $u$
  and $v$ is the length of the shortest path connecting $u$ and $v$. If there
  is no path between $u$ and $v$ then we say that the distance is -1. The
  distance between a vertex and itself is 0. The \textit{maximum distance} for
  a graph $G$ is the largest distance among all pairs of vertices.
\end{definition}

\begin{definition}[Bounded maximum distance]
  For a set of graphs $\mathcal{G} = \{G_1, G_2, G_3,...\}$, we say that
  $\mathcal{G}$ is of \textit{bounded maximum distance} if there exists an
  integer $n \in \mathbb{N}$, such that the maximum distance for every graph in
  $\mathcal{G}$ is smaller than $n$. If such an $n$ does not exist, then we say
  that $\mathcal{G}$ is of \textit{unbounded maximum distance}.
\end{definition}

\begin{proposition}
  The language induced by any !-graph is of bounded maximum distance.
\end{proposition}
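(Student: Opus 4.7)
The plan is to bound the maximum distance via the natural projection map. For any concrete instance $G \in L(H)$, define $\phi\colon V(G) \to V(H)$ sending each vertex to the vertex of $H$ it is a copy of. Since $\mathrm{EXPAND}$ and $\mathrm{KILL}$ only duplicate or delete structure already present in $H$, the map $\phi$ extends to a graph homomorphism from $G$ onto the underlying string graph $\tilde H$, and in particular every path in $G$ projects to a walk of equal length in $\tilde H$.

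The first main step is to establish an \emph{intra-fibre bound}: there is a constant $K$ depending only on $H$ such that for any two vertices $u, v$ lying in the same connected component of $G$ with $\phi(u) = \phi(v)$, one has $d_G(u, v) \leq K$. The idea is that distinct copies $u, v$ of a vertex $a \in V(H)$ arise from different expansions of the !-boxes containing $a$, and one can travel between them by exiting the !-box copy containing $u$ through its boundary wire-vertices (the openness condition guarantees each is adjacent via a single edge to a node-vertex outside the !-box), traversing the bounded outside, and re-entering the !-box copy containing $v$ symmetrically. I would formalize this by induction on the nesting depth of !-boxes containing $a$, yielding a bound polynomial in $|V(H)|$.

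The second main step combines the intra-fibre bound with a pigeonhole argument on paths. Let $P = v_0, \ldots, v_k$ be a shortest path in $G$ projecting to a walk $a_0, \ldots, a_k$ in $\tilde H$. If two positions $i < j$ satisfy $a_i = a_j$ and $j - i > K$, splicing a length-$K$ path from $v_i$ to $v_j$ (provided by the intra-fibre bound) into $P$ yields a strictly shorter path between the same endpoints, contradicting shortestness. Hence each label $a \in V(H)$ can only appear within a window of at most $K+1$ consecutive positions of the walk, and by pigeonhole $k+1 \leq |V(H)|(K+1)$. This bounds the maximum distance in $G$ by a constant depending only on $H$.

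The main obstacle is proving the intra-fibre bound cleanly when !-boxes overlap, since a vertex $a$ can then lie in several non-nested non-disjoint !-boxes and distinct copies of $a$ in $G$ may differ in several expansion coordinates simultaneously. I would address this by first using the structural consequence of openness (every edge crossing a !-box boundary has its wire-vertex endpoint inside and its node-vertex endpoint outside) to constrain how overlap is realized in $G$, and then inducting on nesting depth to handle the general configuration.
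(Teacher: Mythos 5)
Your overall architecture---project a concrete instance $G$ onto $H$ via $\phi$, establish a uniform intra-fibre bound $K$, and then run the shortest-path pigeonhole argument to get $k+1 \leq |V(H)|(K+1)$---is sound, and it is in fact more structured than the paper's own proof, which instead argues operationally that each of the finitely many !-boxes can contribute only a fixed increase to the maximum distance no matter how often it is expanded. The projection step and the pigeonhole step are both correct as stated. The difficulty is that the entire content of the proposition has been concentrated into the intra-fibre bound, and your argument for that bound has a genuine hole exactly where you flag it: overlapping !-boxes. Note that the proposition is asserted for \emph{arbitrary} !-graphs, including those with non-trivial overlap, so this case cannot be set aside.

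Concretely, your path construction (``exit the !-box copy through a boundary wire-vertex, traverse the bounded outside, re-enter'') presupposes that the complement of the !-box occurs only once in $G$. For a top-level non-overlapping box this holds---openness even forces the node-vertex just outside $b$ to lie in no box disjoint from $b$, so it belongs to the fixed context, and nesting is handled by your depth induction. But when $b_1$ and $b_2$ overlap, the ``outside'' of a copy of $b_1$ contains unboundedly many copies of $B(b_2)$, so there is no bounded region to traverse: in the complete bipartite !-graph of Fig.~\ref{fig:bang_k_m_n}, two copies $u_i, u_{i'}$ of the $b_1$-vertex are joined only through some copy $v_j$ of the $b_2$-vertex, and one must argue that a \emph{single} such copy suffices and that the detour has length bounded in terms of $H$ alone. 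Your proposed remedy, induction on nesting depth, does not reach this case, because overlapping boxes are by definition non-nested and sit at the same depth. What is missing is an argument that any two copies of $a$ in the same component can be joined by changing one expansion coordinate at a time, each change realized by a path of bounded length routed through one shared copy of the overlap region or of a common exterior vertex. Until that is supplied, your intra-fibre bound---and hence the proposition---is established only for !-graphs without overlap.
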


\begin{proof}
  Consider a !-box $B \subset G$.
  Applying a KILL operation to $B$ cannot increase the maximum
  distance. Applying an EXPAND operation once could potentially increase it,
  however, applying an EXPAND more than once will not increase it any further.
  Thus, regardless of how many times an EXPAND operation is applied to $B$
  the maximum distance can only increase by a fixed amount. Also, because of
  the symmetric properties of the EXPAND map, any nested !-boxes within $B$
  or any overlapping !-boxes can increase the maximum distance with a fixed
  amount as well regardless of how many EXPAND operations are applied to any of
  them.

  By combining the above observation with the fact that $G$ has finitely many
  !-boxes we can conclude that the set of graphs induced by $G$ is of bounded
  maximum distance.
\end{proof}

In the next proposition, we show that even severely restricted graph grammars
can generate languages which are not induced by any !-graph, thereby
establishing that \textbf{C-edNCE}\ $\not \subseteq$\ \textbf{BG}.

\begin{proposition}
  The language of the LIN-edNCE grammar of Example~\ref{ex:line_graph}:
  \cstikz[0.7]{unbounded.tikz}
  is not induced by any !-graph, even up to wire-encoding.
\end{proposition}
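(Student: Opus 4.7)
The plan is to derive a contradiction from the preceding proposition, which states that the language of any !-graph is of bounded maximum distance. Suppose, for contradiction, that there exists a !-graph $G$ together with a bijection $f : L(G) \to L_{\text{chain}}$ witnessing equality up to wire-encoding, where $L_{\text{chain}}$ denotes the language of the grammar in Example~\ref{ex:line_graph}.

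The first step is to observe that $L_{\text{chain}}$ itself has unbounded maximum distance. A simple induction on the length of a derivation shows that, for each $n \geq 1$, the grammar generates a chain $C_n$ consisting of one input wire-vertex followed by $n$ node-vertices joined into a directed path. The distance in $C_n$ from the input wire-vertex to the final node-vertex is $n$, so the maximum distance of $C_n$ is at least $n$.

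The second step is to argue that wire-encoding cannot decrease the maximum distance. Given $H \in L(G)$ and $f(H) \in L_{\text{chain}}$ equal up to wire-encoding, each $\beta_k$-labelled edge of $f(H)$ corresponds in $H$ to a closed wire with the same endpoints. Since every such $\beta_k$-edge in $C_n$ joins two node-vertices, and since Definition~\ref{def:string-graph} forbids any direct edge between two node-vertices in a string graph, each corresponding closed wire in $H$ must contain at least one interior wire-vertex. Consequently every directed path in $f(H)$ lifts to a path in $H$ of at least the same length, so the maximum distance of $H$ is no smaller than that of $f(H)$.

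Combining these two observations, the string graph $f^{-1}(C_n) \in L(G)$ has maximum distance at least $n$ for every $n \geq 1$, so $L(G)$ is of unbounded maximum distance, contradicting the preceding proposition. The only delicate point is the second step, where one must rule out the degenerate possibility of a ``trivial'' wire-encoding replacing an edge by a closed wire with no interior wire-vertex; this is excluded precisely because the relevant endpoints are node-vertices and string graphs cannot carry direct node-to-node edges.
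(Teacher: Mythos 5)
Your proof is correct and follows essentially the same approach as the paper: the paper's own argument is the one-line observation that the grammar's language has unbounded maximum distance, contradicting the preceding proposition that every !-graph language has bounded maximum distance. Your additional care in checking that wire-encoding cannot decrease distances (since the $\beta_k$-edges join node-vertices, which cannot be directly adjacent in a string graph) fills in a detail the paper's terse proof leaves implicit, but it is the same argument.
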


\begin{proof}
  The language generated by this grammar is of unbounded maximum distance and
  thus it cannot be generated by any !-graph.
\end{proof}


  \presec
  \section{Conclusion and future work}\label{sec:fw}
  \postsec


\begin{wrapfigure}{R}{0.4\textwidth}
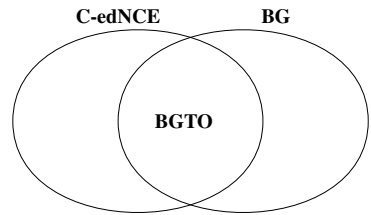

\cstikz[0.7]{venn_diagram2.tikz}
\caption{Conjectured relationship between \textbf{C-edNCE} and
  \textbf{BG}}\label{fig:venn2}
\end{wrapfigure}
We have shown that the language of a !-graph with trivial overlap can always be encoded using a context-free grammar. \textbf{BGTO} languages are in fact a natural restriction to \textbf{BG}, and have already arisen in alternative !-box formalisations, such as the (non-commutative) !-tensors described in~\cite{Kissinger:2014ab}. Thus, gaining a better understanding of \textbf{BGTO} languages is valuable in its own right. The presence of non-trivial overlap tends to cause highly non-local effects when expanding !-boxes, so it could be the case that the property of being context-free actually \textit{characterises} trivial overlap. In other words:

\begin{conjecture}
  The language induced by any !-graph which contains !-boxes whose overlap is non-trivial cannot be described by a C-edNCE grammar, even up to wire-encoding.
\end{conjecture}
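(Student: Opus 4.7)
The plan is to generalise the $K_{m,n}$ non-representability argument already used in Section~\ref{sec:overlap} to every !-graph $H$ carrying a pair of non-trivially overlapping !-boxes. I would first classify non-trivial overlap into two essential shapes for a pair of non-nested !-boxes $b_1, b_2$: either (i) $B(b_1) \cap B(b_2)$ contains a node-vertex, or (ii) it contains a wire-vertex which does not sit on the interior of a closed wire whose two endpoints are distinct node-vertices, one lying only in $b_1$ and the other only in $b_2$. By the definition of trivial overlap, every non-trivially overlapping pair falls into at least one of (i), (ii), so it suffices to refute C-edNCE representability in each case.

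Next, for each case I would construct an explicit two-parameter family $\{G_{m,n}\}_{m,n \geq 1} \subseteq L(H)$ obtained by expanding $b_1$ exactly $m$ times, $b_2$ exactly $n$ times, and killing (or uniformly expanding once) every other !-box of $H$. In case (i) the shared node-vertex gives rise to $\Theta(mn)$ copies whose neighbourhoods split independently along the two expansion directions; in case (ii) the node-vertices adjacent to the shared wire-vertices assemble, after wire-encoding, into a bipartite $K_{m,n}$-pattern with labels inherited from the ambient graph. Both patterns should violate the pumping lemma for hyperedge-replacement languages by the same bilinear-growth bookkeeping used in the proof of Proposition~\ref{prop:generative}'s corollary about Fig.~\ref{fig:bang_k_m_n}; Proposition~\ref{prop:generative} then transports the contradiction from hyperedge-replacement grammars back to C-edNCE, yielding non-representability even up to wire-encoding.

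The main obstacle is the reduction step. The presence of other !-boxes in $H$ can create richer combinatorial structure, and a priori a hypothetical C-edNCE grammar for $L(H)$ need not restrict to a C-edNCE grammar for the bilinearly growing sub-family above. I would try to handle this by exploiting the closure of C-edNCE under intersection with regular structural constraints (see~\cite{graph_grammar_handbook}) to restrict any hypothetical grammar to the concrete graphs obtained by a fixed choice of EXPAND/KILL sequences for all !-boxes other than $b_1, b_2$, and then run the pumping argument on this restriction. Making the restriction rigorous under wire-encoding—where subgraph and sublanguage relations behave more delicately than for ordinary graphs, and where wire-homeomorphism equivalence interacts non-trivially with labelling—appears to be the genuinely hard technical point and is likely why the statement is only conjectured rather than proved.
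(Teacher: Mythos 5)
The statement you are trying to prove is stated in the paper only as a conjecture; the authors offer no proof, and explicitly leave it open. So there is nothing in the paper to compare your route against --- the only question is whether your sketch could be completed, and as it stands it has a gap that goes beyond the reduction difficulty you already acknowledge.

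The central problem is in the claim that each of your cases (i) and (ii) produces a family violating the pumping lemma for hyperedge-replacement languages. The $K_{m,n}$ argument of Section~\ref{sec:overlap} works not because the family grows bilinearly in the two expansion parameters, but because the number of wires grows quadratically in the number of node-vertices ($mn$ wires on $m+n$ node-vertices), which is what the HR pumping lemma actually forbids. A shared node-vertex need not reproduce this. Consider the degenerate instance of case (i) where $B(b_1)\cap B(b_2)$ is a single isolated node-vertex and the boxes contain nothing else: expanding $b_1$ $m$ times and $b_2$ $n$ times yields the discrete graph on $mn$ vertices, so the induced language is the set of \emph{all} discrete graphs (since $k = k\cdot 1$), which is generated by an obvious LIN-edNCE grammar. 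This is a non-trivially overlapping !-graph, by the paper's definition, whose language is context-free --- so your case (i) cannot be closed by a pumping argument, and indeed such examples suggest the conjecture itself needs a sharper hypothesis than the bare definition of non-trivial overlap. Your case (ii) has the same exposure: whether the shared wire-vertices assemble into a $K_{m,n}$-like pattern after wire-encoding depends on how they attach to node-vertices outside the intersection, and for boundary wires or wires whose both endpoints lie in the same box the quadratic edge growth need not materialise.

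The reduction step you flag is also a genuine obstruction, and for a more specific reason than you give: C-edNCE languages are closed under intersection with \emph{recognizable} (MSO-definable) graph properties, but the subfamily obtained by fixing the EXPAND/KILL choices for all !-boxes other than $b_1,b_2$ is defined by provenance in the !-graph semantics, not by a structural property of the resulting concrete graphs, so there is no evident recognizable set to intersect with. Until both of these issues are resolved --- and the first one suggests the conjecture may require excluding degenerate overlaps --- the proposal is a plausible programme rather than a proof, which is consistent with the authors leaving the statement open.
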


If the conjecture holds, then the Venn Diagram in Section~\ref{sec:intro} would simplify to the one in fig.~\ref{fig:venn2},
which lends credence to the notion that BGTO languages are the string graph analogue to regular languages.

Quite aside from classification issues, the next step for context-free string graphs grammars is to develop the tools for working with them. For example, since they are built on top of string graphs, !-graphs can be plugged together along inputs and outputs to get new !-graphs, and more importantly, they can be used to define \textit{!-graph rewrite rules}:
\[ \stikz[0.7]{bang_graph-plug.tikz} \qquad\qquad
\stikz[0.7]{spider-merge-bb.tikz} \]
Just as !-graphs represent families of string graphs, !-graph rewrite rules represent families of string graph rewrite rules. Furthermore, new !-graph rules can be introduced from concrete ones using a technique called !-box induction~\cite{merry_dphil}. The natural next step in this program then is to see if composition, rewriting, and a notion of graphical induction carry through to the context-free case.

\noindent \textbf{Acknowledgements.} We would like to thank the anonymous
referees for their feedback and we wish to gratefully acknowledge financial
support from EPSRC, the Scatcherd European Scholarship, and the John Templeton
Foundation.
\presec

{\footnotesize
  \bibliographystyle{eptcs}
  \bibliography{vladimir_refs}{}
}

\end{document}